\documentclass[11pt]{article}
\usepackage{verbatim}
\usepackage{xcolor}
\usepackage[utf8]{inputenc}
\usepackage{fancyhdr}
\usepackage[letterpaper, portrait, margin=1in]{geometry}
\usepackage[document]{}
\usepackage{graphicx}
\usepackage{enumitem}
\usepackage{subcaption}
\usepackage{caption}
\usepackage{multirow}
\usepackage{float}
\usepackage{relsize}
\usepackage{amsmath}
\usepackage{amssymb}
\usepackage{amsthm}
\usepackage{mathtools}
\usepackage{ellipsis}
\usepackage{bbm}
\usepackage[T1]{fontenc}
\usepackage{threeparttable}
\usepackage{fontspec}
\usepackage{microtype}
\usepackage{booktabs}
\usepackage{mathrsfs}
\usepackage{afterpage}
\usepackage{listings}
\usepackage{tikz}
\usetikzlibrary{arrows.meta,positioning,decorations.pathreplacing}
\usepackage[ruled,vlined]{algorithm2e}
\usepackage[sortcites=true, backref=true, backend=biber, style=alphabetic, maxcitenames=999, maxbibnames=999, maxalphanames=999, sorting=nyt]{biblatex}
\usepackage[colorlinks=true, linkcolor=darkgray, citecolor=gray, urlcolor=black]{hyperref}
\usepackage[capitalise,noabbrev,nameinlink]{cleveref}

\usepackage[most]{tcolorbox}

\newtheorem{theorem}{Theorem}[section]
\newtheorem{lemma}[theorem]{Lemma}

\newtheorem{invariant}[theorem]{Invariant}

\tcolorboxenvironment{theorem}{%
  enhanced,
  breakable,
  colback=white,
  colframe=black!70,
  coltitle=black,
  fonttitle=\bfseries,
  boxrule=0.7pt,
  arc=2pt,
  boxsep=4pt,
  left=6pt,right=6pt,top=4pt,bottom=4pt,
}

\tcolorboxenvironment{lemma}{%
  enhanced,breakable,
  colback=white,
  colframe=black!70,
}
\tcolorboxenvironment{corollary}{%
  enhanced,breakable,
  colback=white,
  colframe=black!70,
}
\tcolorboxenvironment{proposition}{%
  enhanced,breakable,
  colback=white,
  colframe=black!70,
}
\tcolorboxenvironment{invariant}{%
  enhanced,breakable,
  colback=white,
  colframe=black!70,
}
\tcolorboxenvironment{definition}{%
  enhanced,breakable,
  colback=blue!2,
  colframe=blue!60!black,
}
\tcolorboxenvironment{remark}{%
  enhanced,breakable,
  colback=gray!5,
  colframe=gray!70,
  fonttitle=\itshape,
}

\AtBeginRefsection{\GenRefcontextData{sorting=ynt}}
\AtEveryCite{\localrefcontext[sorting=ynt]}
\addbibresource{references.bib}

\DeclareMathOperator*{\argmax}{arg\,max}
\DeclareMathOperator*{\argmin}{arg\,min}

\newcommand{\dpw}{(\Delta + 1)}

\title{Dynamic $\dpw$ Vertex Coloring}
\cfoot{\thepage}
\author{Noam Benson-Tilsen}
\date{12 January 2026}

\begin{document}

\setmainfont{texgyrepagella-regular.otf}
[
    BoldFont = texgyrepagella-bold.otf ,
    ItalicFont = texgyrepagella-italic.otf ,
    BoldItalicFont = texgyrepagella-bolditalic.otf
]

\maketitle

\newcommand{\C}{\mathbb{C}}
\newcommand{\R}{\mathbb{R}}
\newcommand{\Q}{\mathbb{Q}}
\newcommand{\Z}{\mathbb{Z}}
\newcommand{\N}{\mathbb{N}}
\newcommand{\E}{\mathbb{E}}
\newcommand{\Ep}{\mathcal{E}}
\newcommand{\st}{~|~}
\newcommand{\given}{~|~}
\newcommand{\divides}{\bigm|}
\newcommand{\seq}[2]{\{ #1_{#2} \}}
\newcommand{\compl}[1]{\mathcal{C}#1}
\newcommand{\1}{\mathbbm{1}}

\newcommand{\ceil}[1]{\lceil #1 \rceil}

\newcommand{\fa}{~\forall}
\newcommand{\inv}[1]{#1^{-1}}
\newcommand{\kernel}[1]{\text{ker}(#1)}
\newcommand{\Aut}{\text{Aut}}
\newcommand{\Inn}{\text{Inn}}
\newcommand{\degree}[1]{\text{deg}(#1)}

\newcommand{\todo}[1]{\textcolor{red}{#1}}

\newcommand{\A}{\mathcal{A}}
\newcommand{\Adv}{\forall}
\newcommand{\Lo}{\mathcal{L}}
\newcommand{\Hi}{\mathcal{H}}
\newcommand{\Sa}{\mathcal{S}}
\newcommand{\B}{\mathcal{B}}
\newcommand{\Coup}{\mathcal{C}}
\newcommand{\U}{\mathcal{U}}
\newcommand{\M}{\mathcal{M}}
\newcommand{\D}{\mathcal{D}}

\DeclarePairedDelimiter\abs{\lvert}{\rvert}%
\DeclarePairedDelimiter\sizeof{\lvert}{\rvert}%
\DeclarePairedDelimiter\norm{\lVert}{\rVert}%

\makeatletter
\let\oldsizeof\sizeof
\def\sizeof{\@ifstar{\oldsizeof}{\oldsizeof*}}
\let\oldabs\abs
\def\abs{\@ifstar{\oldabs}{\oldabs*}}
\let\oldnorm\norm
\def\norm{\@ifstar{\oldnorm}{\oldnorm*}}
\makeatother
%

\begin{abstract}
    \noindent
    Several recent results from dynamic and sublinear graph coloring are surveyed.
    This problem is widely studied and has motivating applications like network topology control, constraint satisfaction, and real-time resource scheduling.
    Graph coloring algorithms are called \textit{colorers}.
    In \S \ref{sec:prelim} are defined graph coloring, the dynamic model, and the notion of performance of graph algorithms in the dynamic model.
    In particular $\dpw$-coloring, sublinear performance, and oblivious and adaptive adversaries are noted and motivated.
    In \S \ref{sec:warmup} the pair of approximately optimal dynamic vertex colorers given in \cite{barba2017dynamic} are summarized as a warmup for the $\dpw$-colorers.
    In \S \ref{sec:dyn} the state of the art in dynamic $\dpw$-coloring is presented. 
    This section comprises a pair of papers (\cite{bhattacharya2018dynamic} and \cite{bhattacharya2022fully}) that improve dynamic $\dpw$-coloring from the naive algorithm with $O(\Delta)$ expected amortized update time to $O(\log \Delta)$, then to $O(1)$ with high probability. 
    In \S \ref{sec:adaptive} the results in \cite{behnezhad2025fully}, which gives a sublinear algorithm for $\dpw$-coloring that generalizes oblivious adversaries to adaptive adversaries, are presented.
\end{abstract}

\setcounter{tocdepth}{3}
\tableofcontents
    
\section{Preliminaries}
\label{sec:prelim}
    \subsection{Graph coloring}
        A vertex coloring $\chi : V \longrightarrow C$ of a graph $G \coloneqq (V, E)$ is an assignment of the vertices $V$ to colors $\Coup \coloneqq \{ 1, \ldots, k \} \subseteq \N_+$. 
        $\chi$ is proper if $\chi(u) \neq \chi(v)$ for all edges $\{ u, v \} \in E$. 
        $G$ is $k$-colorable if $\chi$ is a proper coloring of $G$ with largest color $\max_{c \in C} c = k$. The smallest $k$ such that there exists a proper coloring of $G$ is the chromatic number $\chi(G)$.

        In general, however, determining $\chi(G)$ is NP-hard, even for $\chi(G) = 3$.
        This motivates studying approximation algorithms (\S \ref{sec:warmup}), as well as algorithms for finding $\dpw$-colorings (\S \ref{sec:dyn} and on), which always exist.
        See \S \ref{subsec:perf} for more basic facts about $\dpw$-coloring performance.
    
    \subsection{Dynamic model}
        In the dynamic model, $G$ undergoes a sequence of updates $\U$ consisting of insertions and deletions of edges and vertices. 
        Note that neither vertex deletions nor edge deletions corrupt a proper coloring.
        In the context of vertex coloring, one can choose between accounting solely for (1) vertex (plus incident edge) insertions or (2) edge insertions.
        \begin{enumerate}
            \item 
                Edge updates can be factored out by
                \begin{enumerate}
                    \item 
                        simulating any edge insertion by removing and adding back one of its endpoints, and
                    \item 
                        delaying accounting for any edge deletion until one of its endpoints is deleted; or
                \end{enumerate}
            \item 
                vertex updates can be factored out by assuming (WLOG) that $G$ begins with the maximal order $n$ it would have at any point during $\U$, in which case vertex updates are the same as edge updates with respect to maintaining a proper vertex coloring.
        \end{enumerate} 
        In \S \ref{sec:warmup} the first approach is taken; in \S \ref{sec:dyn}, the second.

        Maintaining a proper coloring across $\U$ requires recoloring vertices; i.e., upon an edge insertion a dynamic graph colorer must update the current coloring $\chi$ by modifying the colors $\chi(V^\prime)$ of some vertices $V^\prime \subseteq V$. The maximum degree $\stackrel{\U}{\max_{v \in V(G^{\U})}} d(V)$ across $\U$ of any vertex is denoted by $\Delta$.

    \subsection{Performance}
    \label{subsec:perf}
        The performance of a dynamic graph colorer is defined by its (expected amortized) number of recolorings per update, also known as update time, as follows.
        For some underlying set $\mathscr{U}$ of inputs, let update time $T_{\A}(\U)$ be the performance of dynamic colorer $\A$ against an input sequence $\U \sim \mathfrak{S}_{\mathscr{U}}$ drawn from the symmetric group $\mathfrak{S}_{\mathscr{U}}$ of $\mathscr{U}$.\footnote{When considering only edge insertions, as in most of the report, $T$ will specialize to $T(\U, n)$, where $n$ is the order of the initially empty graph.}
        Then the task is to find 
        \begin{align*}
            \argmin_{\A} \mathop{\E}\limits_{\mathbf{U} \sim \mathfrak{S}_{\mathscr{U}}} [T_{\A}(\U)].
        \end{align*}
        Linear-time dynamic $\dpw$-colorers are trivial, but sublinear dynamic $\dpw$-colorers are nontrivial.
        This is the case even when inputs are oblivious --- that is, when input sequences are generated independently of the colorer, such as being drawn uniformly at random from $\mathfrak{S}_{\mathscr{U}}$.
        The end of this report will discuss a paper that gives sublinear performance against adaptively adversarial input sequences, which can force pathological, worst-case behavior based on the colorer.
        In \cite{barba2017dynamic}, presented in \S \ref{sec:warmup}, the colorers use $k(N) \cdot \chi(G)$ colors for a parameter $k > 0$ depending on the maximum number $N$ of vertices of graph $G$ across all updates.
        The expected amortized number of recolorings per update is traded against approximation tightness in the two colorers.
        In particular, for a $k$-colorable graph with parameter $d > 0$, two algorithms are given in which a factor in $O(N^{1/d})$ is traded between performance and approximation tightness.
        
\section{Warmup: Approximately optimal dynamic coloring}
\label{sec:warmup}

    \cite{barba2017dynamic} provides two variations of the same algorithm. 
    The first algorithm, $\A_1$, uses a one-level vertex partition; $\A_2$ uses a two-level partition.\footnote{The paper presents $\A_1$ and $\A_2$ in the other order, which is less clear. I also combine their presentation because they have one difference.} 
    In both, vertices within a first-level partition are colored properly using $\chi(G)$ colors.
    Also in both, $V$ is split into buckets $\bigoplus_{i = 1}^d V_i$ such that $V_i$ has capacity $N_R^{(i - 1) / d}$, where $N_R$ is the value of $n$ at the last update. 
    Note that 
    \begin{align*}
        \sum_{i = 1}^d N_R^{(i - 1) / d} = \frac{N_R - 1}{N_R^{1/d} - 1} \in O(N_R^{1 - 1/d}) \subseteq O(n).
    \end{align*}
    Buckets, however, do not reach their capacity.
    Instead, $V_i$ has an invariant high point of at most $h_i = N_R^{i / d} - N_R^{(i - 1) / d}$ vertices.
    In $\A_2$, each first-level bucket $V_i$ has its up to $h_i$ vertices partitioned into $N_R^{1 / d} - 1$ buckets $V_{i, 1}, \ldots, V_{i, \ceil{N_R^{1 / d} - 1}}$ each of capacity $N_R^{(i - 1) / d}$, with an extra empty \emph{reset bucket}. 
    That is, in $\A_2$, sub-buckets are left-packed to leave room for the reset bucket. 
    The reset bucket is used during insertions. 

    \subsection{Vertex insertion}
        When vertex $v$ is inserted, it is placed in $V_1$. In $\A_2$, $v$ goes into the first empty sub-bucket $V_{1, j}$. If $V_1$ has fewer than $h_i$ vertices, $v$ is assigned one of the leftover colors local to $V_1$. In $\A_2$, this color is local to $V_{1, j}$. 
        But if inserting $v$ to $V_1$ violates the high-point invariant, all vertices in $V_1$ are moved to $V_2$ (or in $\A_2$, moved to the empty sub-bucket, guaranteed to exist by the invariant). 
        The vertices that were transferred are colored again, if possible; if they violate the invariant in $V_2$ as well, then $V_2$ is transferred up to $V_3$. 
        This can propagate up to the last partition, $V_d$. 
        In this case, the entire graph is reset and recolored.

    \subsection{Performance}
        In the worst case, $\A_1$ requires shifting and recoloring $O(N^{1 / d})$ vertices per bucket across all $d$ buckets. 
        Each recoloring is trivial, given that there are more than enough colors to assign each vertex a distinct color. 
        In total gives recolorings in $O(dN^{1 / d})$. Similarly, in $\A_2$, vertices are moved and trivially recolored at most once per first-level partition, giving total recolorings in $O(d)$.

        For $\A_1$, each bucket uses $O(\chi(G))$ colors, and there are $O(d)$ buckets, giving an $O(d)$-approximate coloring. 
        For $\A_2$, each bucket within a partition uses $O(\chi(G))$ colors, and with $O(N^{1 / d})$ buckets in each of the $d$ partitions, this gives an $O(dN^{1 / d})$-approximate coloring. 

        \begin{table}[h]
            \centering
            \begin{threeparttable}
                \caption{Performance and tightness}
                \label{tab:performance}
                \begin{tabular}{@{}lccc@{}}
                    \toprule
                    & \textbf{Recolorings} & \textbf{Approximation tightness} \\ 
                    \midrule
                    $\mathbf{\A_1}$ & $O(dN^{1 / d})$ & $O(d)$ \\
                    $\mathbf{\A_2}$ & $O(d)$ & $O(dN^{1 / d})$ \\
                    \bottomrule
                \end{tabular}
            \end{threeparttable}
        \end{table}

\section{Dynamic $\dpw$-coloring with oblivious adversaries}
\label{sec:dyn}
    In this section, I'll give an overview of the techniques used in \cite{bhattacharya2018dynamic} and \cite{bhattacharya2022fully}. For the remainder all updates will be edge insertions, and $k \coloneqq \Delta + 1$, so that $\Coup = \{ 1, \ldots, \Delta + 1 \}$. $\chi^*$ will denote the coloring during $\U$; $\chi^* = \chi$ exactly when $\U$ finishes, and between atomic steps of $\A$ $\chi^*$ is not guaranteed to be proper.

    \subsection{Logarithmic-time dynamic $\dpw$-coloring}
    \label{subsec:log_dyn_d+1}
        There is a trivial algorithm with $O(\Delta)$ update time. 
        When edge $uv$ is inserted, if $\chi^*(u) = \chi^*(v)$, choose one $x$ of $u$ and $v$ and assign $x$ a color not used by any of its neighbors. 
        Such a color is a blank color of $x$, and the set of such colors is $\B_{N(x)}$.
        Note that for any $x \in V$, $\sizeof{N(x)} \leq \Delta < \Delta + 1 = \sizeof{C}$, so by the pigeonhole principle a blank color always exists.

        \subsubsection{Level data structure}
        
        To improve update times to $O(\log \Delta)$, \cite{bhattacharya2018dynamic} presents $\A_{\log}$, a dynamic colorer with logarithmic update time that uses a level data structure $\ell : V \longrightarrow \{ 4, \ldots, L \}$, where $L \coloneqq \log_\beta \Delta$ for sufficiently large constant $\beta > 0 \in \N$. 
        In other words, $\ell(v)$ is the level of $v$, where there are $O(\log \Delta)$ possible levels. 
        The motive for $\ell$ is to expose a heuristic for which colors to use during recoloring that minimizes the expected length of cascading recoloring chains. 
        For each vertex $v$, let the down-neighbors $\Lo(v)$ of $v$ be the neighbors of $v$ with levels lower than $\ell(v)$, the up-neighbors $\Hi(v)$ those with level at least $\ell(v)$, and same-level neighbors $\Sa(v)$ those with level equal to $\ell(v)$. 
        See Figure \ref{fig:LDSlog}.
        $\A_{\log}$ maintains two invariants:
        \begin{invariant}
        \label{inv:down}
            The number $\phi \coloneqq \sizeof{\Lo(v)}$ of down-neighbors is at least $\beta^{\ell(v) - 5} \in \Omega(\Delta)$ for each $v \in V$.
        \end{invariant}
        \begin{invariant}
        \label{inv:up}
            The number $\sizeof{\Lo(v) \cup \Sa(v)}$ of down- and same-level neighbors is at most $\beta^{\ell(v)} \in O(\Delta)$ for each $v \in V$.
        \end{invariant}

        \begin{figure}[t]
            \centering
            \begin{tikzpicture}[x=1cm,y=1cm,>=Latex, font=\small]
                
                \def\X{-5}
                \def\Xr{5}
                \def\yTop{3}
                \def\yMid{0}
                \def\yBot{-3}
                
                \def\Bx{6.2}        
                \def\Bin{5.75}      
                \def\tick{0.35}     
                
                \newcommand{\RBracket}[3]{
                \draw (#1,#2) -- (#1,#3);
                \draw (#1,#2) -- ++(-\tick,0);
                \draw (#1,#3) -- ++(-\tick,0);
                }
                
                \node[anchor=east] at (\X,\yTop+2.2) {$L = \log_\beta \Delta$};
                \node at (0,\yTop+1.4) {\Large $\boldsymbol{\vdots}$};
                
                \foreach \y/\lab in {
                \yTop/{\ell(v)+1},
                \yMid/{\ell(v)},
                \yBot/{\ell(v)-1}
                }{
                \draw[gray] (\X,\y) -- (\Xr,\y);
                \node[anchor=east] at (\X,\y) {$\lab$};
                
                \node at (-3,\y) {\Large $\boldsymbol{\cdots}$};
                \node at ( 3,\y) {\Large $\boldsymbol{\cdots}$};
                }
                
                \node at (0,\yBot-1.4) {\Large $\boldsymbol{\vdots}$};
                \node[anchor=east] at (\X,\yBot-2.2) {$4$};
                
                \node[circle,draw,fill=white,inner sep=1.2pt] (v) at (0,\yMid) {$v$};
                
                \node[circle,draw,fill=white,inner sep=1.0pt] (h1) at (-2,\yMid) {$u_1$};
                \node[circle,draw,fill=white,inner sep=1.0pt] (h2) at ( 2,\yMid) {$u_2$};
                
                \node[circle,draw,fill=white,inner sep=1.0pt] (up1) at (-1.2,\yTop) {$w_1$};
                \node[circle,draw,fill=white,inner sep=1.0pt] (up2) at ( 1.5,\yTop) {$w_2$};
                
                \node[circle,draw,fill=white,inner sep=1.0pt] (dn1) at (-1.6,\yBot) {$x_1$};
                \node[circle,draw,fill=white,inner sep=1.0pt] (dn2) at ( 1.0,\yBot) {$x_2$};
                
                \draw (v) -- (up1);
                \draw (v) -- (up2);
                \draw (v) -- (dn1);
                \draw (v) -- (dn2);
                
                \draw[bend left=20]  (v) to (h1);
                \draw[bend right=20] (v) to (h2);
                
                \def\LabelX{6.45} 
                
                \def\Bx{8.65}          
                \def\LabelX{\Bx+0.25}  
                
                \def\pad{2.55}         
                \def\labelpad{2.55}    
                \def\Bin{\Bx-\pad}     
                \def\LabelXin{\LabelX-\labelpad} 

                \RBracket{\Bx}{\yMid-0.25}{\yTop+1.55}
                \node[anchor=west,align=justify] at (\LabelX, {(\yMid-0.25+\yTop+1.55)/2})
                {$\Hi(v)$\\{\scriptsize(up-neighbors)}};
                
                \RBracket{\Bin}{\yMid-0.18}{\yMid+0.18}
                \node[anchor=west,align=justify] at (\LabelXin, \yMid)
                {$\Sa(v)$\\{\scriptsize(same-level)}};

                \RBracket{\Bx}{\yBot-1.55}{\yMid-0.35}
                \node[anchor=west,align=justify] at (\LabelX, {(\yBot-1.55+\yMid-0.35)/2})
                {$\Lo(v)$\\{\scriptsize(lower levels)}};
            
            \end{tikzpicture}
            \caption{$\ell(v)$ for each $v \in V$, with up-neighbors $\Hi(v)$, same-level neighbors $\Sa(v) \subseteq \Hi(v)$, and down-neighbors $\Lo(v)$.}
            \label{fig:LDSlog}
        \end{figure}

        Let $\Coup_{\Hi(v)}$ be the colors used by an up-neighbor of $v$, $\U_{\Lo(v)}$ those used by exactly one down-neighbor of $v$, and $\M_{\Lo(v)}$ those used by at least two down-neighbors of $v$. 
        Let $\D_{N(v)} \coloneqq \Coup \setminus \Coup_{\Hi(v)} \setminus \M_{\Lo(v)}$ be the colors used by no up-neighbor of $v$ and at most one down-neighbor of $v$. Let $\tau_v$ be the last timestamp at which $v$ was recolored.
        See Table \ref{tab:objects} for these definitions. Sets are denoted by calligraphic characters. Sets of vertices are named by functions $V \longrightarrow \mathbb{P}(V)$ and sets of colors have names subscripted by the set of vertices over which they are defined.

        \afterpage{
            \begin{table}[t]
                \centering
                \begin{threeparttable}
                    \caption{Data for level data structure for each $v \in V$}
                    \label{tab:objects}
                    \begin{tabular}{@{}lccc@{}}
                        \toprule
                        & \textbf{Name} & \textbf{Definition} & \textbf{Description} \\ 
                        \midrule
                        & $L$ & $\log_\beta \Delta$ & Highest level \\
                        & $\ell(v)$ & $\in \{ 4, \ldots, L \}$ & Level of $v$ \\
                        & $\deg(v)$ & $\sizeof{N(v)}$ & Degree of $v$ \\
                        & $\tau_v$ & $\in \N$ & Last time at which $v$ was recolored \\
                        & $\Lo(v)$ & $\{ u \st \{ u, v \} \in E \wedge \ell(u) < \ell(v) \}$ & Neighbors of $v$ of lower level \\
                        & $\Hi(v)$ & $\{ u \st \{ u, v \} \in E \wedge \ell(u) \geq \ell(v) \}$ & Neighbors of $v$ with at least $v$'s level \\
                        & $\Sa(v)$ & $\{ u \st \{ u, v \} \in E \wedge \ell(u) = \ell(v) \}$\tnote{1} & Neighbors of $v$ with the same level as $v$ \\
                        & $\B_{N(v)}$ & $\{ c \st \forall u \in N(v) : \chi^*(u) \neq c \}$ & Colors of no neighbor of $v$ \\
                        & $\Coup_{\Hi(v)}$ & $\{ c \st \exists u \in \Hi(v) \st \chi^*(u) = c \}$ & Colors of an up-neighbor of $v$ \\
                        & $\U_{\Lo(v)}$ & $\{ c \st \exists! ~ u \in \Lo(v) \st \chi^*(u) = c \}$ & Colors of exactly one down-neighbor of $v$ \\
                        & $\M_{\Lo(v)}$ & $\{ c \st \exists_{\geq 2} u \in \Lo(v) \st \chi^*(u) = c \}$\tnote{2} & Colors of at least two down-neighbors of $v$ \\
                        & $\D_{N(v)}$ & $\Coup \setminus \Coup_{\Hi(v)} \setminus \M_{\Lo(v)}$ & Colors of no up-neighbor and at most one down-neighbor of $v$ \\
                        \bottomrule
                    \end{tabular}
                    \begin{tablenotes}
                        \item[1]{Note that this is a subset of $\Hi(v)$.}
                        \item[2]{Note that this is $\Coup \setminus \B_{N(v)} \setminus \Coup_{\Hi(v)} \setminus \U_{\Lo(v)}$.}
                    \end{tablenotes}
                \end{threeparttable}
            \end{table}
        }
        
        $\D_{N(v)}$ is the most important object. 
        In the worst-case edge insertion, where there might be a recursive cascade of recolorings, $\D_{N(v)}$ will help limit the recursion fan-out to one by exposing colors used by few of the recolored vertex's down-neighbors. 
        By Lemma \ref{lemma:subpalette}, $\D_{N(v)}$ is large enough that sampling uniformly at random from it is unlikely to conflict with the vertex being recolored.
        
        \begin{lemma}
        \label{lemma:subpalette}
            If Invariant \ref{inv:down} and Invariant \ref{inv:up} hold, $\sizeof{\D_{N(v)}} \in \Omega(\Delta)$ for each $v \in V$.
        \end{lemma}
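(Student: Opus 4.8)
The plan is to lower-bound $\sizeof{\D_{N(v)}}$ by the same pigeonhole idea used above for blank colors, but charging separately for the two ways a color gets excluded from $\D_{N(v)}$: being the color of an up-neighbor, or being the color of two or more down-neighbors. Since $\sizeof{\Coup} = \Delta + 1$ and $\D_{N(v)} = \Coup \setminus \Coup_{\Hi(v)} \setminus \M_{\Lo(v)}$, a union bound gives
\[
    \sizeof{\D_{N(v)}} \geq (\Delta + 1) - \sizeof{\Coup_{\Hi(v)}} - \sizeof{\M_{\Lo(v)}} .
\]
First I would bound the two subtracted terms by counting. Every color in $\Coup_{\Hi(v)}$ is the color of at least one up-neighbor of $v$, and distinct colors force distinct such neighbors, so $\sizeof{\Coup_{\Hi(v)}} \leq \sizeof{\Hi(v)}$. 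Every color in $\M_{\Lo(v)}$ is the color of at least two down-neighbors, and the down-neighbor sets witnessing different colors are pairwise disjoint subsets of $\Lo(v)$; hence $2\sizeof{\M_{\Lo(v)}} \leq \sizeof{\Lo(v)} = \phi$, i.e.\ $\sizeof{\M_{\Lo(v)}} \leq \phi / 2$.

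Next I would use that $N(v)$ is partitioned as $\Hi(v) \sqcup \Lo(v)$ and that $\deg(v) \leq \Delta$ by definition of $\Delta$, so $\sizeof{\Hi(v)} = \deg(v) - \phi \leq \Delta - \phi$. Feeding the three estimates into the display,
\[
    \sizeof{\D_{N(v)}} \geq (\Delta + 1) - (\Delta - \phi) - \tfrac{1}{2}\phi = 1 + \tfrac{1}{2}\phi .
\]
It then remains only to invoke Invariant~\ref{inv:down}, which gives $\phi \geq \beta^{\ell(v) - 5} \in \Omega(\Delta)$; hence $\sizeof{\D_{N(v)}} \geq 1 + \tfrac{1}{2}\beta^{\ell(v) - 5} \in \Omega(\Delta)$. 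Invariant~\ref{inv:up} enters only as part of the ambient hypothesis that the $\Lo/\Sa/\Hi$ split is the one the algorithm actually maintains; the quantitative work rides entirely on Invariant~\ref{inv:down} together with the trivial $\deg(v) \leq \Delta$.

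I expect the one delicate point to be the near-total cancellation in the second display: the term $\sizeof{\Hi(v)}$ can swallow essentially the entire $(\Delta + 1)$-color palette, so all of the surviving surplus must be manufactured from the down-neighbors. Two features make this go through and deserve to be stated carefully. First, a color in $\M_{\Lo(v)}$ costs only $\phi/2$ rather than $\phi$ --- replacing $\phi/2$ by $\phi$ would cancel the surplus completely --- so the counting argument for $\M_{\Lo(v)}$ must genuinely use the ``at least two'' in its definition. Second, Invariant~\ref{inv:down} must be read as lower-bounding $\phi$ by a quantity of order $\Delta$, not merely by a positive constant; this is exactly what makes $1 + \tfrac{1}{2}\phi$, rather than $\Delta + 1 - \deg(v)$, the operative bound. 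Everything else is routine arithmetic.
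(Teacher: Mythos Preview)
Your proof is correct and follows essentially the same route as the paper's: bound $\sizeof{\Coup_{\Hi(v)}} \leq \Delta - \phi$ from $\deg(v)\leq\Delta$, bound $\sizeof{\M_{\Lo(v)}}$ by half the down-neighbor count via the ``at least two'' pigeonhole, and conclude $\sizeof{\D_{N(v)}} \geq 1 + \phi/2 \in \Omega(\Delta)$ from Invariant~\ref{inv:down}. The paper additionally introduces the quantity $x = \sizeof{\{u \in \Lo(v) : \chi^*(u)\in\U_{\Lo(v)}\}}$ to get the slightly sharper $\sizeof{\M_{\Lo(v)}} \leq (\phi - x)/2$, but then discards $x$ anyway, so your direct bound is equivalent and arguably cleaner; your observation that Invariant~\ref{inv:up} plays no quantitative role also matches the paper's argument.
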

        \begin{proof}
            Let $x \coloneqq \sizeof{\{ v \st \chi^*(v) \in \U_{\Lo(v)} \}}$ be the number of down-neighbors with a color used exactly once in $N(v)$. Note that $\sizeof{\M_{\Lo(v)}} \leq \frac{\phi - x}{2}$: Each of the $\phi$ down-neighbors of $v$, except for the $x$ down-neighbors with a color used exactly once, has at least one partner with the same color. Therefore 
            \begin{align*}
                \sizeof{\D_{N(v)}} \geq \sizeof{\Coup \setminus \Coup_{\Hi(v)}} - \frac{\phi - x}{2} = (\Delta + 1) - (\Delta - \phi) - \frac{\phi - x}{2} \leq \frac{\phi}{2} + 1,
            \end{align*}
            so $\sizeof{\D_{N(v)}} \in \Omega({\phi}) = \Omega({\Delta})$.
        \end{proof}

        \subsubsection{Algorithm}

            \begin{algorithm}[H]
            \caption{\textsc{Edge insertion}\textsubscript{$\log$}}
            \label{alg:insert_log}
                \KwIn{$G$; $\{ u, v \} \leftrightsquigarrow E(G)$ for deletion or insertion; $\ell$ and associated data structures}
                \KwOut{Updated $\dpw$-coloring $\chi*$}
                \BlankLine
                \textsc{Update levels}\;
                \If{$\{ u, v \}$ is being inserted and $\chi^*(u) = \chi^*(v)$}{
                    $x \gets \displaystyle \argmax_{x \in \{u, v\}} \tau_w$\tcp*{More-recently recolored}
                    \textsc{recolor}\textsubscript{$\log$}($x$)\;
                }
            \end{algorithm}

            \begin{algorithm}[H]
            \caption{\textsc{Update levels}}
            \label{alg:UL}
                \KwIn{$G$; $\chi^*$; $\ell$ and associated data structures}
                \KwOut{$\ell$ and associated data structures with Invariant \ref{inv:down} and Invariant \ref{inv:up} satisfied}
                \BlankLine
                \While{Invariant \ref{inv:down} or Invariant \ref{inv:up} is violated}{
                    \If{there exists $x \in V$ that violates Invariant \ref{inv:up} (having more than $\beta^{\ell(x)}$ down-neighbors)}{
                        find the minimum level $k > \ell(x)$ for $x$ that would have $\sizeof{\Lo(x)} + \sizeof{\Sa(x)} \leq \beta^k$\;
                        $\ell(x) \gets k$\;
                        update any auxiliary data structures\;
                    }
                    \Else{
                        find a vertex $x \in V$ that violates Invariant \ref{inv:down} (having fewer than $\beta^{\ell(x) - 5}$ same-level neighbors)\;
                        \If{there exists a level $k < \ell(x)$ for $x$ that would have $\sizeof{\Lo(x)} \geq \beta^{k-1}$}{
                            let $k^\prime$ be the maximum such level\;
                            $\ell(x) \gets k^\prime$\;
                        }
                        \Else{
                           $\ell(x) \gets 4$\;
                        }
                    }
                    update any auxiliary data structures\;
                }
            \end{algorithm}
            
            \begin{algorithm}[H]
            \caption{\textsc{recolor}\textsubscript{$\log$}}
            \label{alg:recolor_log}
                \KwIn{$G$; $v \in V(G)$ for recoloring; $\chi^*$; $\ell$ and associated data structures}
                \KwOut{$\chi^*$ with new color for $v$}
                \BlankLine
                $c \sim \D_{N(v)}$ uniformly at random\;
                $\chi^*(v) \gets c$\;
                Update relevant auxiliary data structures\; 
                \If{$\exists\, w \in \Lo(v) : \chi^*(w) = c$}{
                    \textsc{recolor}\textsubscript{$\log$}($w$)\;
                }
            \end{algorithm}

    \subsubsection{Analysis}

        Insertion (Algorithm \ref{alg:insert_log}) relies on two subroutines: \textsc{Update levels} (Algorithm \ref{alg:UL}) and \textsc{recolor}\textsubscript{$\log$} (Algorithm \ref{alg:recolor_log}). To insert an edge $\{ u, v \}$ to $E(G)$, first the level data structure $\ell$ is repaired so that Invariants \ref{inv:down} and \ref{inv:up} are satisfied, then $\{ u, v \}$ is added to $E(G)$. If $\chi^*(u) = \chi^*(v)$, the one more recently recolored is recolored. An auxiliary list consisting of invariant-violating \textit{dirty} vertices is also maintained.

        The idea in \textsc{Update levels} is to move vertices in $\ell$ to ensure the invariants hold before repairing $\chi^*$ if needed. First, as long as there is a vertex $x$ with too many (more than $\beta^{\ell(x)}$) down-neighbors, $x$ is moved up to the lowest level such that Invariant \ref{inv:up} holds of it at that level. Note that $x$ moves up levels in $\ell$, the number of down-neighbors of $x$ increases more slowly than $\beta^{\ell(x)}$; in particular it is always possible to fix $x$ this way. On the other hand, promoting $x$ might make some other vertices dirty, but by the end of the while loop all vertices are clean with respect to Invariant \ref{inv:up}. (For a proof of this, see \cite{bhattacharya2018dynamic} Lemma 3.1.) After cleaning each vertex with respect to Invariant \ref{inv:down}, \textsc{Update levels} cleans the vertices with respect to Invariant \ref{inv:down}. Namely, while there is a dirty vertex $x$, $x$ is moved down to the highest level such that it is clean. The properties of the cleaning loop for Invariant \ref{inv:up} also hold in this loop. 
        
        \textsc{Update levels} is the source of \textsc{Edge insertion}\textsubscript{$\log$}'s $O(\log \Delta)$ runtime. Details of the proof of \textsc{Update level}'s $O(\beta) = O(\log \Delta)$ running time are in \cite{bhattacharya2022fully} Theorem 3.2. More importantly, the other component of \textsc{Edge insertion}\textsubscript{$\log$} --- \textsc{recolor}\textsubscript{$\log$} --- is implementable in constant expected amortized time.

        In \textsc{recolor}\textsubscript{$\log$}, if $\chi^*(u) \neq \chi^*(v)$ then $\chi^*$ remains proper. Otherwise, the more-recently recolored vertex $x$ of $u$ and $v$, tracked by $\tau_u, \tau_v$, is recolored. To recolor $x$, $\chi^*(x)$ is set to a random sample $c \sim \D_{N(x)}$. Note that each color in $\D_{N(x)}$ is assigned by $\chi^*$ to at most one neighbor $w$ of $x$. If $c = \chi^*(w)$, then $\{ x, w \}$ is a monochromatic edge. To fix this corruption, \textsc{recolor}\textsubscript{$\log$} recurses on $w$. Observe that this can cascade down $O(\beta^{\ell(x)})$ levels, each requiring a sample from $\D_{N(x)}$ and a scan of $\Lo(x)$ for $w$. Sampling is difficult technically because $\sizeof{\D_{N(x)}} \gg \beta^{\ell(x)}$ in general, but truncating the sample at the first $\beta^{\ell(x)}$ elements of $\D_{N(x)}$ works.

        In any case, each initial call to \textsc{recolor}\textsubscript{$\log$} takes time in $O(\beta^{\ell(x)}) = O(\Delta)$. To show that \textsc{recolor}\textsubscript{$\log$} has expected amortized time in $O(1)$, it is necessary to show that it is called with sufficiently low probability given $\chi^*$ and $x$'s new color. By Lemma \ref{lemma:subpalette} and the definition of $\D_{N(x)}$, at most one of $\D_{N(x)}$'s $\Omega(\Delta)$ elements conflicts with a down-neighbor $w \in \Lo(x)$ of $x$; therefore $c \sim \D_{N(x)}$ is a conflict with probability $\Pr[\kappa_{\chi^*, c, w}] \in O(\frac{1}{\Delta})$, so the expected amortized runtime $T_{\textsc{recolor}\textsubscript{$\log$}}$ of \textsc{recolor}\textsubscript{$\log$} is 
        \begin{align*}
            \E[T_{\textsc{recolor}\textsubscript{$\log$}}] = \frac{\Pr[\kappa_{\chi^*, c, w}]}{\Delta} \in O\left(\frac{\Delta}{\Delta}\right) = O(1).
        \end{align*}
        Then the total expected amortized update time for $\A_{\log}$ is in $O(\log \Delta)$.
        
    \subsection{Constant-time dynamic $\dpw$-coloring}
    \label{subsec:constant_dyn_d+1_bh}
        To improve the performance of dynamic $\dpw$-coloring from $O(\log \Delta)$ to $O(1)$ update time, \cite{bhattacharya2022fully} builds on \cite{bhattacharya2018dynamic} by disjoining \textsc{recolor}\textsubscript{constant} into two cases. One case corresponds to \textsc{recolor}\textsubscript{constant} from \cite{bhattacharya2018dynamic}; in the other case, when the recoloring palette is sufficiently small, a deterministic recoloring subroutine is used with time in $O(n)$.

        For this algorithm, $\A_{\text{constant}}$, the level data structure $\ell$ is spiritually the same as in $\A_{\log}$. It has levels $\{ -1, \ldots, \ceil{\log_3(n)} \}$ with associated data similar to those in $\A_{\log}$. Note that in both $\A_{\text{constant}}$ and $\A_{\log}$ there are numerous implementation details that are laid out in the papers. In this report I focus instead on the crucial algorithmic and formal details that distinguish these methods from others and from each other. The analysis and its apparatus in $\A_{\text{constant}}$ is significantly more involved than that of $\A_{\log}$, but the key invariant is similar.

        \begin{invariant}
            \label{inv:palette_size}
                If $v \in V$ was moved to level $\ell(v) \neq -1$ during a recoloring, the recoloring used a palette with size most $\frac{3^{\ell(v)} + 1}{2}$. If $\ell(v) = 1$ the palette had size 1.
            \end{invariant}

            $\A_{\text{constant}}$ has update time in $O(1 + n\frac{\log n + \Delta}{t})$ with high probability. For $t \in \Omega(n(\log n + \Delta))$ this is $O(1)$. The factor of $\Delta n$ comes from initializing the associated data structures that constitute the level data structure and in particular for each $v \in V$ there are $O(\Delta)$ elements of $\mathcal{D}_{N(v)}$. The factor of $n \log n$ comes from the total update time of vertices at levels containing at least a constant fraction of vertices that are recolored because they are at some point the more recently recolored endpoint of a conflicting inserted edge. The level data structure is reproduced in Figure \ref{fig:LDSconst} with several new definitions, including some artifacts of the analysis whose explanation I omit; the colored terms are defined in \ref{subsubsec:analysis_const}. In addition to numbering levels differently from $\A_{\log}$, proving constant update time with high probability for $\A_{\text{constant}}$ involves tracking which vertices are recolored at an \textit{original} root call to \textsc{recolor}\textsubscript{constant} --- endpoints of an inserted edge that conflicts with $\chi^*$ --- or from an \textit{induced}, descendant call to \textsc{recolor}\textsubscript{constant}. 

            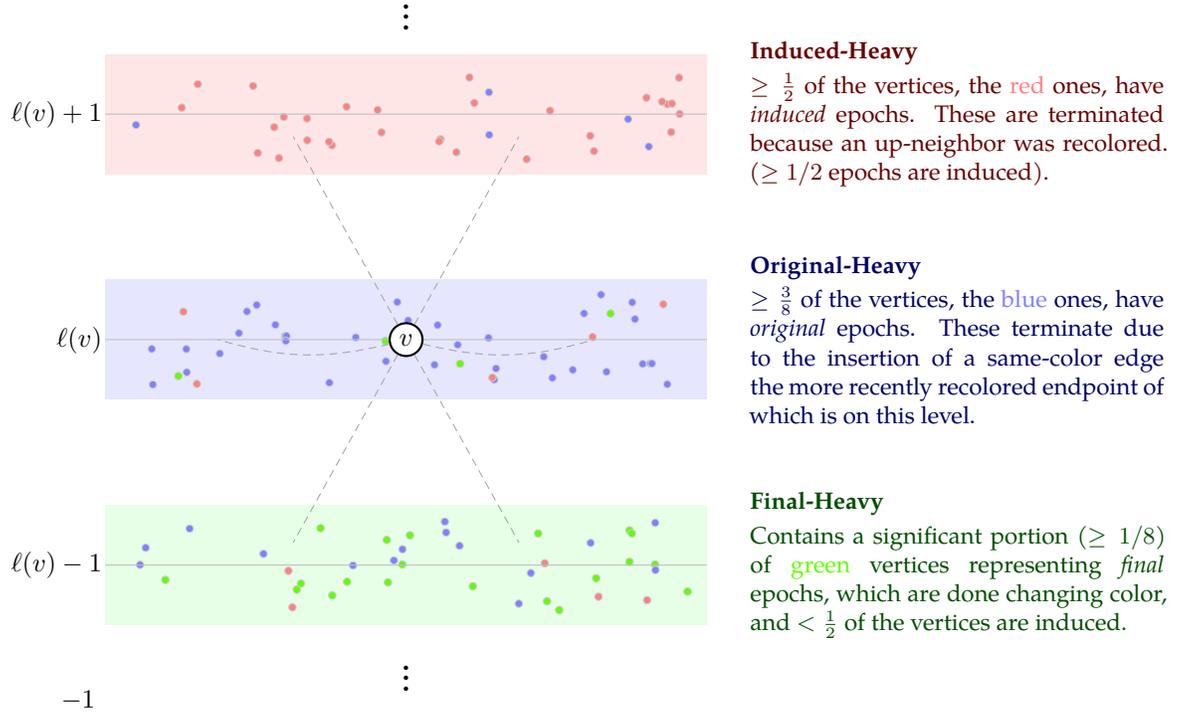
\begin{figure}[t]
                \centering
                \begin{tikzpicture}[x=1cm,y=1cm,>=Latex, font=\small]
            
                    \colorlet{cInduced}{red!10}
                    \colorlet{cOriginal}{blue!10}
                    \colorlet{cFinal}{green!10}
            
                    \colorlet{nInduced}{red!50}
                    \colorlet{nOriginal}{blue!50}
                    \colorlet{nFinal}{green!50!lime}
            
                    \tikzstyle{inode}=[circle, draw=gray!40, inner sep=1.0pt, font=\tiny]
            
                    \def\X{-7}    
                    \def\Xr{1}    
                    \def\XC{-3}   
                    \def\yTop{3}
                    \def\yMid{0}
                    \def\yBot{-3}
                    \def\bandH{0.8} 
                    \def\legX{1.5} 
            
            
                    \fill[cInduced] (\X, \yTop-\bandH) rectangle (\Xr, \yTop+\bandH);
                    
                    \pgfmathsetseed{1}
                    \foreach \i in {1,...,30} {
                        \node[inode, fill=nInduced] at (\XC + rand*3.8, \yTop + rand*0.6) {};
                    }
                    \foreach \i in {1,...,5} { \node[inode, fill=nOriginal] at (\XC + rand*3.8, \yTop + rand*0.6) {}; }
            
                    \node[anchor=west, text width=5.5cm, align=justify, color=red!40!black, font=\footnotesize, inner sep=2pt] 
                        at (\legX, \yTop) {
                        \textbf{Induced-Heavy}\\[2pt]
                        $\geq \frac{1}{2}$ of the vertices, the \textcolor{nInduced}{red} ones, have \textit{induced} epochs. These are terminated because an up-neighbor was recolored. ($\ge 1/2$ epochs are induced).
                        };

                    \fill[cOriginal] (\X, \yMid-\bandH) rectangle (\Xr, \yMid+\bandH);
            
                    \pgfmathsetseed{2}
                    \foreach \i in {1,...,35} {
                        \node[inode, fill=nOriginal] at (\XC + rand*3.8, \yMid + rand*0.6) {};
                    }
                    \foreach \i in {1,...,5} { \node[inode, fill=nInduced] at (\XC + rand*3.8, \yMid + rand*0.6) {}; }
                    \foreach \i in {1,...,5} { \node[inode, fill=nFinal] at (\XC + rand*3.8, \yMid + rand*0.6) {}; }
            
                    \node[anchor=west, text width=5.5cm, align=justify, color=blue!40!black, font=\footnotesize, inner sep=2pt] 
                        at (\legX, \yMid) {
                        \textbf{Original-Heavy}\\[2pt]
                        $\geq \frac{3}{8}$ of the vertices, the \textcolor{nOriginal}{blue} ones, have \textit{original} epochs. These terminate due to the insertion of a same-color edge the more recently recolored endpoint of which is on this level.
                        };

                    \fill[cFinal] (\X, \yBot-\bandH) rectangle (\Xr, \yBot+\bandH);
            
                    \pgfmathsetseed{3}
                    \foreach \i in {1,...,20} {
                        \node[inode, fill=nFinal] at (\XC + rand*3.8, \yBot + rand*0.6) {};
                    }
                     \foreach \i in {1,...,15} {
                        \node[inode, fill=nOriginal] at (\XC + rand*3.8, \yBot + rand*0.6) {};
                    }
                     \foreach \i in {1,...,5} {
                        \node[inode, fill=nInduced] at (\XC + rand*3.8, \yBot + rand*0.6) {};
                    }
            
                    \node[anchor=west, text width=5.5cm, align=justify, color=green!30!black, font=\footnotesize, inner sep=2pt] 
                        at (\legX, \yBot) {
                        \textbf{Final-Heavy}\\[2pt]
                        Contains a significant portion ($\ge 1/8$) of \textcolor{nFinal}{green} vertices representing \textit{final} epochs, which are done changing color, and $<\frac{1}{2}$ of the vertices are induced.
                        };

            
                    \node[anchor=east] at (\X,\yTop+1.8) {$L = \lceil\log_3(n - 1)\rceil$};
                    \node at (\XC,\yTop+1.4) {\Large $\boldsymbol{\vdots}$};
            
                    \foreach \y/\lab in {
                        \yTop/{\ell(v)+1},
                        \yMid/{\ell(v)},
                        \yBot/{\ell(v)-1}
                    }{
                        \draw[gray!50] (\X,\y) -- (\Xr,\y); 
                        \node[anchor=east, fill=white, inner sep=1pt, text=black] at (\X,\y) {$\lab$};
                    }
            
                    \node at (\XC,\yBot-1.4) {\Large $\boldsymbol{\vdots}$};
                    \node[anchor=east] at (\X,\yBot-1.8) {$-1$};
            
                    \tikzstyle{snode}=[circle,draw,fill=white,inner sep=2pt, thick]
                    \node[snode] (v) at (\XC,\yMid) {$v$};
            
                    \begin{scope}[gray!70, thin, densely dashed]
                        \draw (v) -- (-4.5, \yTop - 0.3);
                        \draw (v) -- (-1.5, \yTop - 0.3);
                        
                        \draw[bend left=15] (v) to (-5.5, \yMid);
                        \draw[bend right=15] (v) to (-0.5, \yMid);
                        
                        \draw (v) -- (-4.5, \yBot + 0.3);
                        \draw (v) -- (-1.5, \yBot + 0.3);
                    \end{scope}
            
                \end{tikzpicture}
                \caption{Classification of levels based on epoch types. The small colored nodes represent individual epochs, colored by their termination cause. Vertex $v$ is shown at level $\ell(v)$, with weak dashed lines indicating its neighborhood in different levels.}
                \label{fig:LDSconst}
            \end{figure}

        \subsubsection{Algorithm}

        Let $\phi(v, \ell^*) = \sizeof{\{ u \in N(v) \st \ell(u) < \ell^* \}}$ be the number of neighbors of $v$ below level $\ell^*$ for each $v \in V$. \textsc{Edge insertion}\textsubscript{constant} in the nontrivial case of a conflicting edge is almost the same as \textsc{Edge insertion}\textsubscript{$\log$}, without the step to move vertices that break the invariants associated with $\A_{\log}$ between levels, and with retries when $\phi$ is too large. 

        \begin{algorithm}[H]
        \caption{\textsc{Edge insertion}\textsubscript{constant}}
        \label{alg:insert_const}
            \KwIn{$G$; $\{ u, v \} \leftrightsquigarrow E(G)$ for deletion or insertion; $\ell$ and associated data structures}
            \KwOut{Updated $\dpw$-coloring $\chi*$}
            \BlankLine
            \If{$\{ u, v \}$ is being inserted and $\chi^*(u) = \chi^*(v)$}{
                $x \gets \displaystyle \argmax_{x \in \{u, v\}} \tau_w$\tcp*{More-recently recolored}
                \textsc{recolor}\textsubscript{constant}($x$)\;
            }
        \end{algorithm}
            
        \begin{algorithm}[H]
        \caption{\textsc{recolor}\textsubscript{constant}}
        \label{alg:recolor_const}
            \KwIn{$G$; $v \in V(G)$ for recoloring; $\chi^*$; $\ell$ and associated data structures}
            \KwOut{$\chi^*$ with new color for $v$}
            \BlankLine
            \If{$\phi(x, \ell(x)) < 3^{\ell(x) + 2}$}{
                \textsc{Deterministic recolor}($x$)\;
                \Return null\;
            }
            \Else{
                \Return \textsc{Random recolor}($x$)\;
            }
        \end{algorithm}
        
        \begin{algorithm}[H]
        \caption{\textsc{Deterministic recolor}}
        \label{alg:det_recolor}
            \KwIn{$G$; $v \in V(G)$ for recoloring; $\chi^*$; $\ell$ and associated data structures}
            \KwOut{$\chi^*$ with new color for $v$}
            \BlankLine
            \For{$c \in \mathcal{D}_{N(v)}$}{
                \If{there is no vertex $w \in \Lo(v)$ with color $\chi^*(w) = c$}{
                    Set $\chi^*(v) \gets c$\;
                    Update down-neighbor data structure of $v$\;
                    Set $\ell(v) = -1$\;
                    \Return\;
                }
            }
        \end{algorithm}

        \begin{algorithm}[H]
        \caption{\textsc{Random recolor}}
        \label{alg:rand_recolor}
            \KwIn{$G$; $v \in V(G)$ for recoloring; $\chi^*$; $\ell$ and associated data structures}
            \KwOut{$\chi^*$ with new color for $v$}
            \BlankLine
            Set $\ell^\prime \gets \ell(v)$\;
            \While{$\phi(v, \ell^\prime + 1) \geq 3^{\ell^\prime + 2}$}{
                Increment $\ell^\prime$\;
            }
            Set $\ell(v) \gets \ell^\prime$\;
            Draw $c \sim \mathcal{D}_{N(v)}$ uniformly at random\;
            \If{$c \neq \chi^*(v)$}{
                Set $\chi^*(v) \gets c$\;
                Update down-neighbor data structure of $v$\;
            }
            \If{$c \in \mathcal{U}_{\Lo(v)} \setminus \mathcal{C}_{\Hi(v)}$}{
                Let $w \in \Lo(v)$ be such that $\chi^*(w) = c$\;
                \textsc{Recolor}\textsubscript{constant}($w$)\;
            }
        \end{algorithm}

        \subsubsection{Analysis}
        \label{subsubsec:analysis_const}
			The essence of the analysis for $\A_{\text{constant}}$ is the notion of an epoch $\Ep$, which characterizes the infrequency with which some vertices are recolored. Namely, epoch $\Ep$ corresponds to a vertex $x = v(\Ep)$ during which $x$ maintains its color $\chi^*(\Ep)$ and level $\ell(v)$. The cost $c(\Ep)$ of an epoch is the time required for the call to \textsc{Recolor}\textsubscript{constant}, applied to $v(\Ep)$, that began $\Ep$ by changing $\chi^*(v(\Ep))$ and $\ell(v(\Ep))$. When $\ell(x) = -1$ for $x = v(\Ep)$, and therefore $x$ is done being recolored, the analysis registers $c(\Ep)$. In other words, each vertex's contribution to $\A_{\text{constant}}$'s total update time comes from the sum of the cost of its epochs. An epoch is demarcated by either an original call to \textsc{Recolor}\textsubscript{constant}, when it is called original, or an induced call to \textsc{Recolor}\textsubscript{constant}, when it is called induced. See Figure \ref{fig:epochs}.

            \begin{figure}[t]
                \centering
                \begin{tikzpicture}[
                    thick,
                    >=Stealth,
                    font=\small,
                    event/.style={circle, fill=black, inner sep=2pt},
                    info/.style={align=center, font=\footnotesize}
                ]
            
                \draw[->] (0,0) -- (12,0) node[right] {Time $\tau$};
                \node[anchor=north east] at (0,0) {Node $v$};
            
                \coordinate (t1) at (1,0);
                \coordinate (t2) at (5,0);
                \coordinate (t3) at (9,0);
                \coordinate (tend) at (11.5,0);
            
                \node[event] (e1) at (t1) {};
                \node[event] (e2) at (t2) {};
                \node[event] (e3) at (t3) {};
            
				\node[below=0.3cm of e1, info] (c1) {\textsc{Recolor}\textsubscript{constant}($v$)\\Start $\Ep_1$};
                \node[below=0.3cm of e2, info] (c2) {\textsc{Recolor}\textsubscript{constant}($v$)\\Start $\Ep_2$\\Cost $c(\Ep_2)$};
				\node[below=0.3cm of e3, info] (c3) {\textsc{Recolor}\textsubscript{constant}($v$)\\Start $\Ep_3$\\Cost $c(\Ep_3)$};
            
                \draw[decoration={brace, amplitude=10pt, raise=5pt}, decorate]
                    (t1) -- (t2) node[midway, above=20pt, info] (epoch1) {
                        \textbf{Epoch $\Ep_1$} \\
                        $\ell(\Ep_1) = k \neq -1$ \\
                        $\chi(\Ep_1) = c_1$
                    };
            
                \draw[decoration={brace, amplitude=10pt, raise=5pt}, decorate]
                    (t2) -- (t3) node[midway, above=20pt, info] (epoch2) {
                        \textbf{Epoch $\Ep_2$} \\
                        $\ell(\Ep_2) = -1$ \\
                        $\chi(\Ep_2) = c_2$
                    };
            
                \draw[decoration={brace, amplitude=10pt, raise=5pt}, decorate]
                    (t3) -- (tend) node[midway, above=20pt, info] {
                        \textbf{Epoch $\Ep_3$} \\
                        $\dots$ \\
                        $\dots$
                    };
            
                \draw[->, dashed, red!70!black, line width=1pt]
                    (c2.south) .. controls +(down:2.5cm) and +(down:2.5cm) .. (epoch1.south)
                    node[midway, below=8pt, font=\scriptsize\itshape] {Cost charged to previous epoch};
            
                \end{tikzpicture}
                \caption{Epochs $\Ep_1, \Ep_2, \Ep_3$ for vertex $x = v(\Ep_1) = v(\Ep_2) = v(\Ep_3)$. $\Ep$ corresponds to the maximal time interval between consecutive recolorings of $x$. The cost of an epoch with level $\ell(\Ep) = -1$ ($\Ep_2$) is charged to the preceding epoch ($\Ep_1$). Here $\Ep_3$ could be induced.}
                \label{fig:epochs}
            \end{figure}
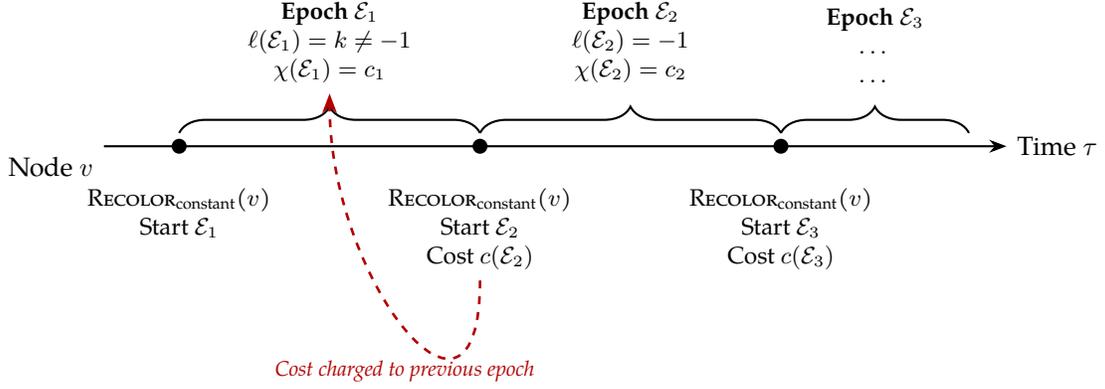

            Because no vertex $v(\Ep)$ changes level during $\Ep$, the set $\mathscr{E}$ of epochs can be partitioned across $\ell$. The bottleneck set of levels in $\ell$ is the set of \textit{original-heavy} levels, in which less than half of the vertices have induced epochs and less than an eighth have final epochs; that is, at least $\frac{3}{8}$ of the vertices in an original-heavy level have an original call to \textsc{Recolor}\textsubscript{constant}. The notation $c$ for cost is extended to the sum of costs of epochs in a level. 
            
            Then the main task of the analysis is to bound the cost of original-heavy levels, which is nontrivial because the cost of an original-heavy level depends on the fact that it is original-heavy. This means that Invariant \ref{inv:palette_size} does not help to lower bound the expected number of insertions before a conflicting edge is inserted analogously to how Invariants \ref{inv:down} and \ref{inv:up} ensured a sufficiently small probability of conflict for each edge insertion in $\A_{\log}$.

            To overcome this obstacle in the analysis, two time functions are defined. First, the duration $dur(\Ep)$ of epoch $\Ep$ is the number of edge insertions during $\Ep$. Second, the pseudo-duration $psdur(\Ep)$ of $\Ep$ is the number of distinct colors that neighbors of $v(\Ep)$ take during $\Ep$ until one of them takes the color $\ell(\Ep)$ that $v(\Ep)$ chose. For example, an adaptive adversary would ensure that the other endpoint, $u$, of each insertion $(v(\Ep), u)$ incident with $v(\Ep)$ has the same color $\chi^*(\Ep)$ as what $v(\Ep)$ took at the beginning of $\Ep$. See Figure \ref{fig:psdur}.

            \begin{figure}[t]
                \centering
                \begin{tikzpicture}[
                    scale=0.85, 
                    >=latex, 
                    vertex/.style={circle, draw, minimum size=1cm, inner sep=0pt, font=\small},
                    neighbor/.style={circle, draw, minimum size=0.8cm, inner sep=0pt, font=\scriptsize},
                    faint edge/.style={-, black!20, thin},
                    notify arrow/.style={->, thick, shorten >=2pt, shorten <=2pt},
                    label box/.style={fill=white, inner sep=2pt, font=\scriptsize, midway, sloped},
                    dots/.style={font=\large}
                ]
            
                \begin{scope}
                    \node[font=\bfseries] at (0, 4.5) {Oblivious adversary};
            
                    \node[vertex, fill=blue!20] (v1) at (0, 0) {$v(\epsilon)$};
            
                    
                    \node[neighbor, fill=green!20] (n1) at (160:3.5) {$u_1$};
                    \draw[faint edge] (n1) -- (v1);
                    \draw[notify arrow] (n1) -- node[label box] {1 $\neq$} (v1);
            
                    \node[neighbor, fill=yellow!20] (n2) at (135:3.5) {$u_2$};
                    \draw[faint edge] (n2) -- (v1);
                    \draw[notify arrow] (n2) -- node[label box] {2 $\neq$} (v1);
            
                    \node[neighbor, fill=red!20] (n3) at (110:3.5) {$u_3$};
                    \draw[faint edge] (n3) -- (v1);
                    \draw[notify arrow] (n3) -- node[label box] {3 $\neq$} (v1);
            
                    \node[dots] at (85:3.5) {$\dots$};
            
                    \node[neighbor, fill=blue!20] (nq) at (60:3.5) {$u_q$};
                    \draw[faint edge] (nq) -- (v1);
                    \draw[notify arrow] (nq) -- node[label box, font=\scriptsize\bfseries] {$q$ $=$ (Stop)} (v1);
                \end{scope}
            
                \begin{scope}[xshift=8cm] 
                    \node[font=\bfseries] at (0, 4.5) {Adaptive adversary};
            
                    \node[vertex, fill=blue!20] (v2) at (0, 0) {$v(\epsilon)$};
            
                    
                    \node[neighbor, fill=blue!20] (a1) at (160:3.5) {$u_1$};
                    \draw[faint edge] (a1) -- (v2);
                    \draw[notify arrow] (a1) -- node[label box, font=\scriptsize\bfseries] {1 $=$ (Stop)} (v2);
            
                    \node[neighbor, fill=blue!20] (a2) at (135:3.5) {$u_2$};
                    \draw[faint edge] (a2) -- (v2);
            
                    \node[neighbor, fill=blue!20] (a3) at (110:3.5) {$u_3$};
                    \draw[faint edge] (a3) -- (v2);
            
                    \node[dots] at (85:3.5) {$\dots$};
                \end{scope}
            
                \end{tikzpicture}
                \caption{Example pseudo-duration, first in the oblivious case as in $\A_{\log}$, then how an adaptive adversary would exploit pseudo-duration to try to decrease epoch length.}
                \label{fig:psdur}
            \end{figure}
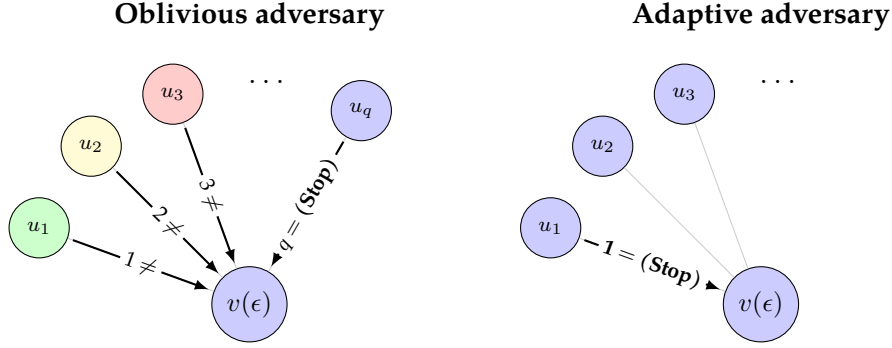
            
            By definition, $psdur(\Ep) \leq dur(\Ep)$, and it turns out that there is a low upper bound on the probability that $\Ep$ is short (\cite{bhattacharya2022fully} Lemma 3.11). Further, there is a low upper bound on the probability that a level has too many epochs while having more than a small number of short epochs. In fact, roughly speaking, the probability that any level has this undesirable trait is in $O(\frac{\log n}{n^a})$ for large constant $a$. This is responsible for the factor of $\log n$ in the total runtime, which is amortized out by sufficiently many edge insertions. Without this bound on the chance of epochs being too short in pseudo-duration, the deterministic runtime of $\A_{\log}$ would be in $O(tn^2 + \Delta n)$; in this case, even with $t \in \Omega(n(\log n + \Delta))$, $\A_{\log}$'s update time would have at least two extra factors of $n$.
            
    \section{Dynamic $\dpw$-coloring with adaptive adversaries} 
    \label{sec:adaptive}
        $\A_1$, $\A_2$, $\A_{\log}$, and $\A_{\text{constant}}$ assume oblivious inputs. That is, they assume inserted edges are drawn uniformly at random from the set $\binom{V(G)}{2} \setminus E(G)$ of non-edges of $G$. Obliviousness is congruous with various application paradigms in which inputs come from complicated real-world systems with no tractable distribution. \cite{behnezhad2025fully} examines $\dpw$-coloring that performs robustly against adaptively adversarial inputs, which break uniformity assumptions of the input and pessimally induce worst-case performance. An adaptive adversary $\Adv$ can stealthily inspect or simulate $\dpw$-colorer $\A$, determine $\A$'s performance on each permutation $\U \in \mathfrak{S}_{\mathscr{U}}$ of the desired inputs $\mathscr{U}$, and input the $\U$ that maximizes $\A$'s amortized update time to $\A$. To wit, \cite{behnezhad2025fully} presents a dynamic $\dpw$-colorer, $\A_\forall$, with update time in $\tilde{O}(n^{8/9})$ against such an adaptive adversary. To obtain sublinear time against $\forall$, $\A_\forall$ takes a phase-based approach that handles so-called sparse and dense vertices separately. Sparse vertices are recolored randomly at deterministic intervals so that their palettes typically have sufficiently many surplus colors; dense vertices are recolored repeatedly a bounded number of times in expectation until they are colored properly.
            
        In the previous $\dpw$-colorers, foreknowledge of $\Delta$ allows the algorithm to determine the number of buckets or height of the hierarchical vertex partition. In this case, using all $O(\Delta)$ available colors is an explicit invariant. 
        \begin{invariant}
        \label{inv:all_colors}
            Each color $c \in \{ 1, \ldots, \Delta + 1 \}$ is assigned to a number of vertices in $\tilde{O}(\frac{n}{\Delta})$.
        \end{invariant}
        With Invariant \ref{inv:all_colors}, verifying that a color is proper to assign to some vertex takes time in $\tilde{O}(\frac{n}{\Delta})$ instead of in $\Theta(\Delta)$. $\A_\forall$ uses a sparse--dense vertex decomposition to make it easier to track which colors are available for each vertex. The decomposition is roughly the Harris--Schneider--Su \cite{harris2016distributed} (HSS) decomposition, which partitions $V(G)$ into sparse vertices $V_S$ and dense vertices $V_D$. Let $G_S \coloneqq G[V_S]$ and $G_D \coloneqq G[V_D]$. $V_S$ is the set of vertices each $v$ of which has $\sizeof{E(G) \cap N(v)^2} \leq (1 - \epsilon^2)\binom{\Delta}{2}$; i.e., the neighborhood of each sparse vertex is $\epsilon^2$-far from being a $\Delta$-clique. It further partitions $G[V_D]$ into $C_1, \ldots, C_k$:
        \begin{align*}
            V(G) = V_S \oplus \bigoplus_{i = 1}^k V(C_i).
        \end{align*}
        Each $C_i \in \{ C_1, \ldots, C_k \}$ is an \textit{almost-clique}, so for some $\epsilon > 0$
        \begin{itemize}
            \item 
                $\sizeof{V(C_i)} \in [(1 - \epsilon) \Delta, (1 + \epsilon) \Delta]$ ($C_i$ has order close to $\Delta$)
        \end{itemize}
        and for any $v \in C_i$
        \begin{itemize}
            \item
                $\sizeof{V(C_i)} - \sizeof{N(v)} \leq  \epsilon \Delta$ ($v$ is adjacent to almost all of $C_i$) and
            \item
                $\sizeof{N(v) \cap V(G) \setminus V(C_i)} \leq \epsilon \Delta$ (few edges leave $C_i$).               
        \end{itemize}
        See Figure \ref{fig:hss}.
        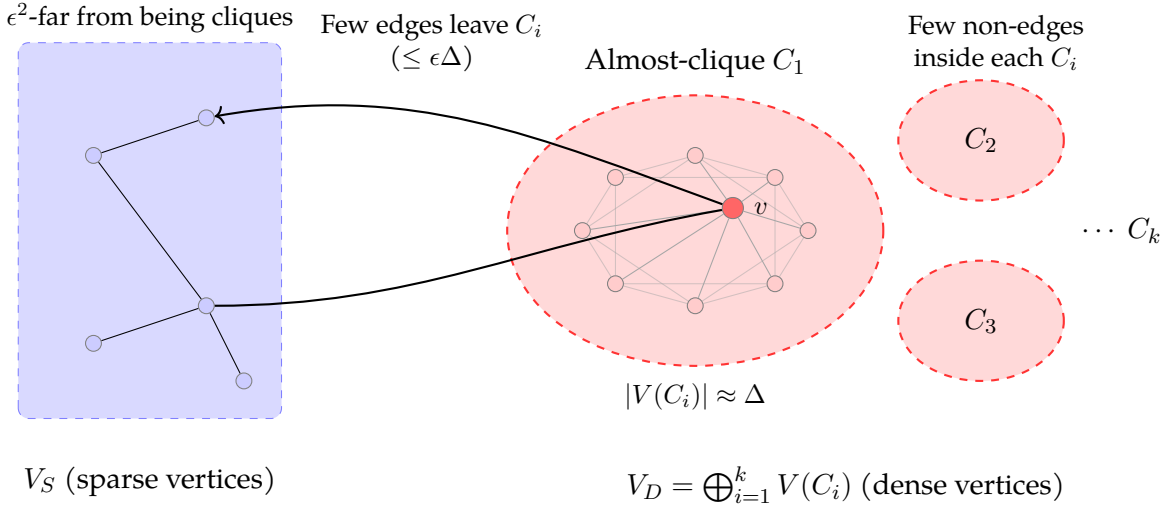
\begin{figure}[t]
            \centering
            \begin{tikzpicture}[
                vnode/.style={circle, draw=black!50, fill=gray!20, inner sep=0pt, minimum size=6pt},
                snode/.style={vnode, fill=blue!20}, 
                dnode/.style={vnode, fill=red!20},  
                clique region/.style={draw=red!80, thick, dashed, rounded corners, fill=red!15},
                vs region/.style={draw=blue!50, dashed, rounded corners, fill=blue!15},
                edge internal/.style={thin, gray!70},
                edge external/.style={thick, black},
                label text/.style={font=\small, align=center} 
            ]
        
            \draw[vs region] (-5, -2) rectangle (-1.5, 3);
            
            \node[anchor=north] at (-3.25, -2.5) {$V_S$ (sparse vertices)};
            
            \node[label text, anchor=south] at (-3.25, 3) {$\epsilon^2$-far from being cliques};
        
            \node[snode] (s1) at (-2.5, 2) {};
            \node[snode] (s2) at (-4, 1.5) {};
            \node[snode] (s3) at (-2.5, -0.5) {};
            \node[snode] (s4) at (-4, -1) {}; 
            \node[snode] (s5) at (-2, -1.5) {}; 
        
            \draw (s1) -- (s2);
            \draw (s2) -- (s3);
            \draw (s3) -- (s4);
            \draw (s3) -- (s5);
        
            \begin{scope}[xshift=4cm, yshift=0.5cm]
                
                \draw[clique region] (0,0) ellipse (2.5cm and 1.8cm);
                \node[anchor=south] at (0, 1.9) {Almost-clique $C_1$};
                \node[label text, anchor=north] at (0, -1.9) {$|V(C_i)| \approx \Delta$};
        
                \foreach \angle/\name in {0/c1a, 45/c1b, 90/c1c, 135/c1d, 180/c1e, 225/c1f, 270/c1g, 315/c1h} {
                    \node[dnode] (\name) at (\angle:1.5cm and 1cm) {};
                }
                \node[dnode, fill=red!60, minimum size=8pt, label={[font=\small]right:$v$}] (v) at (0.5, 0.3) {};
        
                \foreach \n in {c1a, c1b, c1c, c1e, c1f, c1g, c1h} \draw[edge internal] (v) -- (\n);
                \foreach \u/\w in {c1a/c1b, c1b/c1c, c1c/c1d, c1d/c1e, c1e/c1f, c1f/c1g, c1g/c1h, c1h/c1a,
                                    c1a/c1c, c1c/c1e, c1e/c1g, c1g/c1a,
                                    c1b/c1d, c1d/c1f, c1f/c1h, c1h/c1b} {
                    \draw[edge internal, opacity=0.5] (\u) -- (\w);
                }
        
                
                \begin{scope}[xshift=3.8cm, yshift=1.2cm]
                    \draw[clique region] (0,0) ellipse (1.1cm and 0.8cm);
                    \node at (0,0) {$C_2$};
                \end{scope}
        
                \begin{scope}[xshift=3.8cm, yshift=-1.2cm]
                    \draw[clique region] (0,0) ellipse (1.1cm and 0.8cm);
                    \node at (0,0) {$C_3$};
                \end{scope}
        
                \node[anchor=west] at (5cm, 0) {$\cdots ~C_k$};
        
            \end{scope}
        
            \node[align=center, anchor=north] at (6, -2.5) {$V_D = \bigoplus_{i=1}^k V(C_i)$ (dense vertices)};
        
        
            \node[label text] (note_external) at (8, 3) {Few non-edges\\inside each $C_i$};
        
            \node[label text] (note_external) at (0.5, 3) {Few edges leave $C_i$\\($\leq \epsilon \Delta$)};
            
            \draw[edge external, ->] (v) to[out=160, in=10] (s1);
            \draw[edge external] (v) to[out=190, in=0] (s3);
        
            \end{tikzpicture}
            \caption{HSS decomposition scheme. $V(G)$ is partitioned into sparse vertices $V_S$ and dense vertices $V_D$. $V_D$ is a disjoint union of almost-cliques $C_1, C_2, \dots, C_k$. Vertex $v \in C_1$ illustrates the property of being connected to almost all of its own clique with few edges leaving it.}
            \label{fig:hss}
        \end{figure}
        (There are a few more properties that hold of the almost-cliques involving adjustment complexity (see \cite{behnezhad2025fully} Theorem 2).)

        Decomposition is useful because $\A_\forall$ maintains colorings of $G_S$ and the $C_i$ independently. For $G_S$, $\A_\forall$ runs a one-shot refresh colorer that ensures each sparse vertex has $\Omega(\epsilon^2 \Delta)$ surplus colors in expectation. Namely, \textsc{One-shot sparse coloring} samples a random color for each vertex uniformly at random, then assigns the nonconflicting colors. See Algorithm \ref{alg:one_shot}. 
        Typically, this successfully recolors a constant fraction of $G_S$ and assigns identical colors to many neighbors of the typical vertex, thereby leaving $\Omega(\epsilon^2 \Delta)$ surplus colors for each vertex's palette. To ensure these colors are available, \textsc{One-shot sparse recoloring} is run after every $\Theta(\epsilon^2 \Delta)$ updates; each update recolors at most one vertex, so the surplus colors at vertex are reduced by at most one, giving remaining surplus colors in $\Omega(\epsilon^2 \Delta) - \Theta(\epsilon^2 \Delta) = \Omega(\epsilon^2 \Delta)$. In other words, for sparse vertices there are $\Theta(\epsilon^2 \Delta)$-update batches of updates, demarcated by \textsc{One-shot sparse recoloring}. Then to recolor a sparse vertex, the probability that one of these surplus vertices is selected uniformly at random is in $\frac{\Omega(\epsilon^2 \Delta)}{O(\Delta + 1)} = \Omega(\epsilon^2)$. From this, $O(\epsilon^{-2})$ such samples, each with a $\tilde{O}(\frac{n}{\Delta})$ feasibility check thanks to Invariant \ref{inv:all_colors}, will produce a good color, giving $\tilde{O}(\frac{n}{\epsilon^2 \Delta}) = n^{1 - \Omega(1)}$ --- sublinear --- recoloring time for sparse vertices.\footnote{After each batch begins with \textsc{One-shot sparse recoloring}, an extra greedy algorithm from \cite{assadi2025simple} colors the constant fraction of $V_S$ that was not successfully recolored.}

        \begin{algorithm}[H]
        \caption{\textsc{One-shot sparse coloring}}
        \label{alg:one_shot}
            \KwIn{Graph $G$; sparse vertices $V_S \subset V(G)$}
            \KwOut{Recolored vertices $V^\prime \subseteq V_S$}
            \BlankLine
            Initialize $V^\prime \gets \emptyset$\;
            \For{$v \in V_S$}{
                Sample $\chi^\prime(v) \sim [\Delta + 1]$ uniformly at random\;
            }
            \For{$v \in V_S$}{
                \If{$c(v) \neq \chi^*(w)$ for each $w \in N(v)$}{
                    Assign $\chi^*(v) \gets c(v)$\;
                    Add $v$ to $V^\prime$\;
                }
            }
            \Return{$V_S^\prime$}\;
        \end{algorithm}

        For $V_D$, dynamic coloring is more involved. The key technique is a dynamized static algorithm from \cite{assadi2019sublinear} in which each almost-clique $C_i$ is colored as follows. First, an approximate maximum matching of non-edges $\binom{C_i}{2} \setminus E(C_i)$ is found, and matching vertices are colored identically. In other words, as many nonadjacent vertices as possible are colored, using two vertices per color. Second, a perfect matching from the remaining vertices to the remaining colors is found. The matching is on a bipartite graph $\mathcal{H}$ with parts $\mathcal{V}$ (remaining vertices) and $\mathcal{C}$ (remaining colors), with $\{ v, c \} \in E(\mathcal{H})$ if $c \notin N(v) \setminus \mathcal{V}$. A perfect matching on $\mathcal{H}$ takes each vertex to an unused color, providing a proper coloring. The existence of the matching is nonobvious, and making this step dynamic without spending time linear in $\Delta$ (and therefore possibly linear in $n$, against $\forall$) is highly nontrivial.

        To dynamize this static algorithm, there are two cases. In the first case, for almost-cliques with order at least $\Delta + 1$, colors are partitioned into heavy and light colors. A color $c$ is heavy in almost-clique $C_i$ if there is a number in $\Omega(\Delta)$ of edges $(a_i, b_i)$ such that $a_i \in V(C_i), b_i \notin V(C_i), \chi^*(a_i) = c$. Light colors are not heavy. Heavy colors are less useful than light colors. Fortunately, $\mathcal{H}$ is still colorable if the heavy colors are excluded from $\mathcal{C}$, using an augmenting path that exists with high constant probability. In the second case, for almost-cliques with size less than $\Delta + 1$, there are two subcases. If the non-edge matching is sufficiently large (at least $\frac{\Delta}{10}$), $\A_\forall$ behaves similarly to the large almost-clique case, because enough of the vertices have already been handled. Otherwise, to recolor $v \in V(C_i)$, a random color $c$ yet unused in $C_i$ is picked uniformly at random. If $v$ has sparse neighbors, $c$ might cause a conflict, with probability in $\Omega(\frac{1}{k})$, where $k = \Delta + 1 - \sizeof{V(C_i)}$. Then in $\tilde{O}(\epsilon \Delta)$ such samples, $\A_\forall$ will typically find a usable color. If this does not work, then an augmenting path through $\mathcal{H}$ of length five, using three vertices $v_1, v_2, v_3 \in \mathcal{V}$ and three colors in $\mathcal{C}$, gives sufficiently large constant probability of properly coloring $v_1, v_2, v_3$. Finding the augmenting path takes the most runtime: $\tilde{O}(k)$ samples, each taking $\tilde{\frac{n}{\Delta}}$-time feasibility checks, contribute time in $\tilde{O}(\epsilon n)$ --- which is sublinear.

\section{Acknowledgments}
    Thanks to the authors for their contributions to the field and to Quanquan Liu for comments. Thanks also to OpenAI and Google's generative pre-trained transformer LLMs (Large \LaTeX~Mudslingers), which helped make the figures. 

\newpage
\printbibliography[title={References}]
\label{sec:references}

\end{document}